\newtheorem{theorem}{\sc Theorem}[section]
\newtheorem{definition}{\sc Definition}[section]
\newtheorem{lemma}{\sc Lemma}[section]
\newcommand{\dbl}{\setstretch{1.5}}
\newcommand{\sgl}{\setstretch{1.2}}
\newcommand{\bs}[1]{\boldsymbol{#1}}
\newcommand{\mr}[1]{\mathrm{#1}}
\newcommand{\bm}[1]{\mathbf{#1}}
\newcommand{\ds}[1]{\mathds{#1}}
\DeclareMathOperator*{\argmin}{argmin}
\begin{document}

\sgl 

\pagestyle{empty}

~
\vskip 3cm

\noindent {\LARGE \bf One-step estimator paths for concave regularization} 

\vskip 1cm

\noindent{\Large Matt Taddy}

{\large
\vskip .2cm \noindent
{Microsoft Research and the  University of Chicago Booth School of Business}\\
\texttt{faculty.chicagobooth.edu/matt.taddy}}

\vskip 2cm

{\noindent The statistics literature of the past 15 years has established many
favorable properties for sparse diminishing-bias regularization: techniques
which can roughly be understood as providing estimation under penalty
functions spanning the range of concavity between $\ell_0$ and $\ell_1$ norms.
However, lasso $\ell_1$-regularized estimation remains the standard tool for
industrial `Big Data' applications because of its minimal computational cost
and the presence of easy-to-apply rules for penalty selection.   In response,
this article proposes a simple new algorithm framework that requires no more
computation than a lasso path: the path of one-step estimators (POSE) does
$\ell_1$ penalized regression estimation on a grid of decreasing penalties,
but adapts coefficient-specific weights to decrease as a function of the
coefficient estimated in the previous path step.  This provides sparse
diminishing-bias regularization at no extra cost over the fastest lasso
algorithms. Moreover, our `gamma lasso' implementation of POSE is accompanied
by a reliable heuristic for the fit degrees of freedom, so that standard
information criteria can be applied in penalty selection. We also provide
novel results on the distance between weighted-$\ell_1$ and $\ell_0$ penalized
predictors; this allows us to build intuition about POSE and other
diminishing-bias regularization schemes.  The methods and results are
illustrated in extensive simulations and in application of logistic regression
to evaluating the performance of hockey players. }

\newpage
\dbl
\pagestyle{plain}
\vskip 1cm
\section{Introduction}
\label{intro}

For regression in high-dimensions, it is useful to regularize estimation
through a penalty on coefficient size.   $\ell_1$  regularization \citep[i.e.,
the lasso of][]{tibshirani_regression_1996} is especially popular, with costs
that are non-differentiable at their minima and can lead to  coefficient
solutions of exactly zero.  A related approach is concave penalized
regularization (e.g. SCAD from \citealt{fan_variable_2001} or MCP from
\citealt{zhang_nearly_2010}) with cost functions that are also spiked at zero
but flatten for large values (as opposed to the constant increase of an $\ell_1$
norm).  This yields sparse solutions where  large non-zero values are estimated
with little bias. 

The combination of  \textit{sparsity} and \textit{diminishing-bias} 
 is appealing in many settings, and a large literature on concave
penalized estimators has developed over the past 15 years.  For example, many
authors \citep{fan_variable_2001,fan_nonconcave_2004}  have contributed work on their \textit{oracle
properties}, a class of results showing conditions under which coefficient
estimates through concave penalization, or in related schemes, will be the
same as if you knew the sparse `truth' (either asymptotically or with high
probability).   From an information compression perspective,  the increased
sparsity encouraged by diminishing-bias penalties (since single large
coefficients are able to account for the signals of other correlated
covariates) leads to lower memory, storage, and communication requirements.
Such savings are especially important in distributed  computing systems
\citep[e.g.,][]{taddy_distributed_2015,gentzkow_measuring_2015}; these sorts of Big Data applications provide the original motivation behind our work in this article.

Unfortunately,  exact solvers for nonconvex penalized estimation  all require
significantly more compute time than a standard lasso.  This has precluded their use in settings -- e.g., text or web-data analysis
-- where both $n$ (the number of observations) and $p$ (covariate dimension)
   are very large. As we review in Section \ref{sec:weighted},  recent
   literature recommends the use of approximate solvers. These approximations
   take the form of iteratively-weighted-$\ell_1$ regularization, where the
   coefficient-specific weights are based upon results from previous
   iterations of the approximate solver.  Work on one-step estimation (OSE),
   e.g. by
   \cite{zou_one-step_2008}, shows that even a single step of such
   weighted-$\ell_1$ regularization is enough to get solutions that are close to
   optimal, so long as the pre-estimates are
\textit{good enough} starting points. 
The crux of success is  finding starts that are, indeed, good enough.

\begin{figure*}[tbh]
\vskip -.25cm
\includegraphics[width=\textwidth]{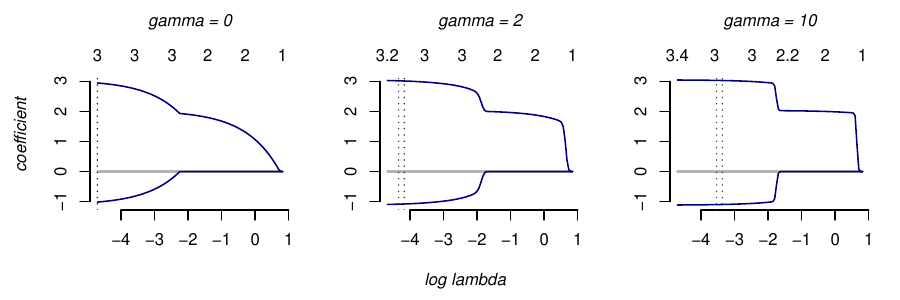}
\vskip -.25cm
\caption{\label{gamlr_eg} Gamma lasso estimation on $n=10^3$ 
 observations of $y_i = 4 + 3x_{1i} - x_{2i} + \varepsilon_i$, where
$\varepsilon_i \stackrel{ind}{\sim} \mr{N}(0,1)$ and
$\{x_{1i},x_{2i},x_{3i}\}$ are marginally standard normal with correlation of
0.9 between covariates ($x_{3i}$ is spurious). The penalty path has $T=100$
segments, $\lambda^1 = n^{-1}\left| \sum_i x_{1i}y_i\right|$, and
$\lambda^{100} = 0.01\lambda^1$. Degrees of freedom are on top and vertical
lines mark AICc and BIC selected models (see Section
\ref{sec:select}).}
\vskip -.25cm
\end{figure*}

This article provides a complete framework for sparse diminishing-bias
regularization that combines ideas from OSE with the concept of a
\textit{regularization path} -- a general technique, most famously associated
with the LARS algorithm \citep{efron_least_2004}, that estimates a sequence of
models under decreasing amounts of regularization.  So long as the estimates do not change too quickly along the path, such  algorithms can be very fast to run and are an efficient way to obtain a high-quality \textit{set} of models to choose amongst.  

A path of one-step estimators (POSE; Algorithm \ref{posealgo})
provides $\ell_1$ penalized regression  on a grid of decreasing
penalties, but adapts coefficient-specific weights to decrease as a function
of the coefficient estimated in the previous path step.  POSE takes advantage
of a natural match between path algorithms and one-step estimation: OSE
relies upon inputs being close to the optimal solution, which is precisely the
setting where path algorithms are most efficient.  We formalize `close' with a  result in Theorem \ref{thm:sparseapprox} that relates weighted-$\ell_1$ to $\ell_0$ regularization.

This framework allows us to provide 
\begin{itemize}
\item a {\it path} of coefficient fits, each element of which corresponds to sparse diminishing-bias regularization estimation under a different level of penalization; where
\item obtaining the path of coefficient fits requires {\it no more computation} than  a state-of-the-art $\ell_1$ regularization path algorithm; and
\item there are good closed-form rules for selection of the optimal penalty level along this path.
\end{itemize}
The last capability here is derived from a Bayesian interpretation for our \textit{gamma lasso} implementation of POSE, from which we are able to construct information criteria for penalty selection.
We view such tools as an essential ingredient for practical applicability in large-scale industrial machine learning where, e.g., cross-validation is not always viable or advisable.

The remainder of this paper is outlined as follows.  Section \ref{sec:algos}
presents the general regularized regression problem and introduces POSE, our
path of one-step estimators algorithm, and the gamma lasso (GL), our
implemented version of POSE.  Section \ref{sec:weighted} gives an overview on
the relationship between concave and weighted-$\ell_1$ regularization. Section \ref{sec:select} provides a Bayesian model interpretation
for the gamma lasso, and derives from this model a set of information criteria
that can be applied in penalty selection along the regularization path.
Finally, we present two empirical studies:
an extensive  simulation experiment in Section \ref{sec:sim}, and in Section \ref{sec:nhl} we investigate the data analysis question: given all goals in the past decade of NHL hockey, what can we say about individual player contributions?

\section{Paths of one-step estimators}
\label{sec:algos}

Denote $n$ response observations as $\bm{y} = [y_1,\ldots,y_n]'$ and the associated matrix of $p$ covariates as $\bm{X} =
[\bm{x}_1 \cdots \bm{x}_n]'$, with rows $\bm{x}_i = [x_{i1},\ldots,x_{ip}]'$ and columns $\bs{\chi}_j = [x_{1j},\ldots,x_{nj}]'$. Since the size of penalized $\beta_j$ depends upon the units of $x_{ij}$,  it is common to scale
the coefficient by $\mr{sd}(\bs{\chi}_j)$, the standard deviation of the $j^{th}$ column
of $\bm{X}$, before assessing its penalty cost.  We ignore this for notational convenience, but if desired you can simply replace $x_{ij}$ by $x_{ij}/\mr{sd}(\bs{\chi}_j)$
throughout. Write $\eta_{i} =
\alpha+\bm{x}_i'\bs{\beta}$ as the linear model  for observation $i$.  Denote with $l(\alpha, \bs{\beta})$, or shortened to $l(\bs{\eta})$ ,  an objective function proportional to the deviance.  For example, in Gaussian (linear)
regression, $l(\bs{\eta})$ is the sum-of-squares $0.5\sum_i \left(y_i - \eta_i\right)^2$ and in binomial (logistic)
regression,  $l(\bs{\eta}) = -\sum_i \left[\eta_iy_i -
\log(1+e^{\eta_i})\right]$ for $y_i \in [0,1]$. 

A penalized estimator is  the
solution to
\begin{equation} \label{pendev}
\argmin_{\alpha,\beta_j\in\ds{R}}~~\left\{~l(\alpha,{\bs{\beta}}) + n\lambda \sum_{j=1}^p c(\beta_j)~\right\},
\end{equation}
where $\lambda>0$ controls overall penalty magnitude and  $c(\cdot)$ is the  coefficient cost function.
 
A few common cost functions are: 
$\ell_2$, $c(\beta) \propto \beta^2$ \citep[ridge,][]{hoerl_ridge_1970}; $\ell_1$, $c(\beta) \propto |\beta|$ \citep[lasso,][]{tibshirani_regression_1996}; the `elastic net' mixture of $\ell_1$ and $\ell_2$ \citep{zou_regularization_2005}; and the log penalty $c(\beta) \propto \log(1+\gamma|\beta|)$ \citep{candes_enhancing_2008}.  Those that have
a non-differentiable spike at zero (all but ridge) lead to sparse estimators,
with some coefficients set to exactly zero.   The curvature of the penalty
away from zero dictates then the weight of shrinkage imposed on the nonzero
coefficients:  $\ell_2$ costs increase with coefficient size,  lasso's $\ell_1$
penalty has zero curvature and imposes constant shrinkage, and as curvature
goes towards $-\infty$ one approaches the $\ell_0$ penalty of subset selection.
In this article we are primarily interested in {\it concave cost functions},
like the log penalty, occupying the range between $\ell_1$ and $\ell_0$ penalties.

Penalty size, $\lambda$, acts as a {\it squelch}: it suppresses noise to
focus on the true input signal. Large $\lambda$ lead to very simple 
model estimates, while as $\lambda \rightarrow 0$ we approach maximum
likelihood estimation (MLE). Since you don't know optimal $\lambda$,
practical application of penalized estimation requires a {\it regularization
path}: a $p \times T$ field of $\bs{\hat\beta}$ estimates, say $\bs{\hat\beta}\vert_{\lambda}$, obtained while
moving from high to low penalization along $\lambda^1 > \lambda^2 \ldots >
\lambda^T$.  These paths begin at $\lambda^1$ set to infimum $\lambda$ such that
(\ref{pendev}) is minimized at $\bs{\hat\beta}\vert_{\lambda} = \bm{0}$, and move to a pre-specified $\lambda^T$ (e.g., $\lambda^T=
0.01\lambda^1$).  

Our path of one-step estimators (POSE) framework is in  Algorithm \ref{posealgo}.   

{\vskip .25cm
\begin{algorithm}[hb]
\caption{\label{posealgo} POSE }
\vskip .2cm
Initialize $\lambda^1 = \mathrm{inf}\left\{\lambda:~ \bs{\hat\beta}\vert_{\lambda} = \bf{0}\right\}$, so that $\bs{\hat\beta}_1 = \bf{0}$. ~~Say $\hat S_t = \{j: \hat\beta_j^t\neq 0 \}$

\vskip .2cm
Set step size
$0 < \delta < 1$.

\vspace{-.75cm}
\begin{align}
\text{for}~t=2\ldots T :&\notag \\
\lambda^{t} &= \delta \lambda^{t-1}\notag\\
\omega^{t}_j  &=  
\left\{ 
  \begin{array}{r}
    c'(|\hat\beta^{t-1}_j|) ~\text{for}~j \in \hat S_{t-1} \\
    1  ~\text{for}~j \in \hat S_{t-1}^c  
  \end{array} 
  \right. 
  \label{wset}\\
\left[\hat\alpha,\bs{\hat\beta}\right]^t &= \argmin_{\alpha,\beta_j\in\ds{R}}~~
l(\alpha,\bs{\beta}) + n\sum_j \lambda^t\omega^t_j|\beta_j| \label{l1pen}
\end{align}
\vskip -.3cm
\end{algorithm}\vskip .2cm}

\noindent We are assuming a cost function that is differentiable away
from zero and has been scaled such that $\lim_{b\to 0} c'(b) = 1$.  Since
POSE starts at simple $\ell_1$ penalization (i.e., with $c(\beta_j) = 1$),  our initial penalty weight is available analytically as $\lambda^1 =
n^{-1}\max\left\{\big|\partial l(\bs{\beta})/\partial
\beta_j|_{\bm{0}}\big|,~j=1\ldots p\right\}$, the maximum mean absolute
gradient evaluated at $\bs{\beta} = \bm{0}$; see the supplement for more
detail.

Section \ref{sec:weighted} details how POSE relates to concave regularization.  For some quick intuition, consider POSE with a concave cost function (e.g., the log penalty in Figure \ref{solution}).
The derivative $c'(|\hat\beta|)$ will be positive but decreasing with larger values of  $\hat\beta$, such that the \textit{weight} on the $\ell_1$ penalty for $\hat\beta_j^{t}$ will \textit{diminish} with the size of $|\hat\beta_j^{t}|$.  This implies that coefficient estimates later in the path will be less biased towards zero if that coefficient has a large value earlier in the path.

\begin{figure*}[t]
\vskip -.25cm
\includegraphics[width=\textwidth]{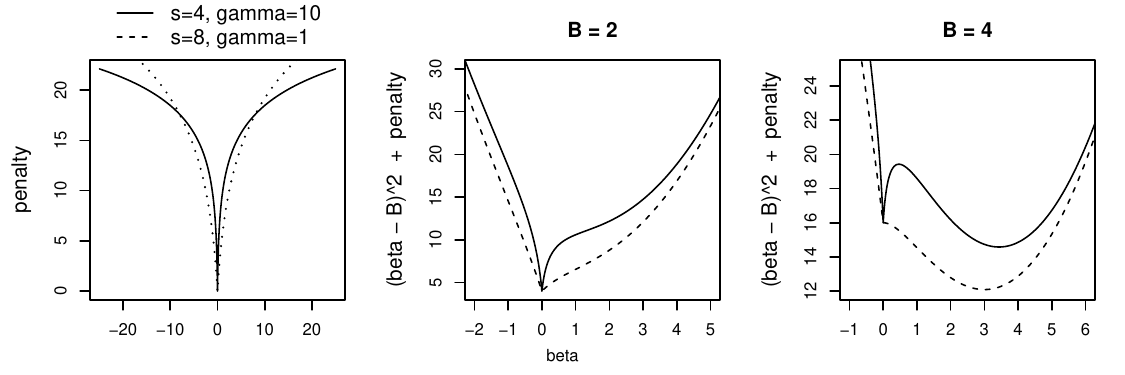}
\vskip -.25cm
\caption{\label{solution} Log penalties $c(\beta) = s\log(1 + \gamma|\beta|)$ 
and penalized objectives $(\beta-B)^2 + c(\beta)$.}
\end{figure*}

\subsection{The gamma lasso}
\label{sec:gamlr}

The gamma lasso (GL) specification for POSE is based upon the log penalty,
\begin{equation}\label{logpen}
c(\beta_j) =  \gamma^{-1}\log(1+\gamma|\beta_j|),
\end{equation} where $\gamma > 0$. 
This appears under a variety of parameterizations and names in the literature;
see \citet{mazumder_sparsenet_2011} and applications in
\citet{friedman_fast_2008}, \citet{candes_enhancing_2008},
\citet{cevher_learning_2009}, \citet{taddy_multinomial_2013} and
\citet{armagan_generalized_2013}.   The penalty is nondifferentiable at zero
and concave away from zero with curvature $c''(|b|) =
-\gamma/(1+\gamma|b|)^2$ and gradient $c'(|b|) = 1/(1+\gamma|b|)$ (note that
$\lim_{b\to 0} c'(b) = 1$ as required).  This  spans
the range from $\ell_1$ to $\ell_0$  penalization: 
$\lim_{\gamma\to 0} c(b) = |b|$, 
while large $\gamma$  yield $\ell_0$-type costs with $c'(|b|) \approx 0~\forall b\neq 0$. 
(Note that $\lim_{\gamma\to\infty} c(b) = 0$; however, POSE yields forward stepwise selection in this limit).

GL simply replaces line (\ref{wset}) in Algorithm \ref{posealgo}  with
\begin{equation}\label{gammalasso} \omega^{t}_j  = \left(1 + \gamma
|\hat\beta^{t-1}_j|\right)^{-1} ~~j=1\ldots p  \end{equation} Behavior of the
resulting paths  is governed by $\gamma$, which we refer to as the penalty
{\it scale}.  Under $\gamma=0$, GL is just the usual lasso.  Bias diminishes
faster for larger $\gamma$.  At the  extreme, $\gamma=\infty$ yields a greedy
subset selection routine where a coefficient is unpenalized in all segments
after it first becomes nonzero.   Figure \ref{gamlr_eg} shows solutions in a
simple problem.

Each  gamma lasso path segment is solved through coordinate descent (see supplement).  The algorithm is implemented in {\tt
c} as part of the {\tt gamlr} package for {\sf R}. 
Usage of {\tt gamlr} mirrors that of its
convex penalty analogue {\tt glmnet} \citep{friedman_regularization_2010}, the
fantastic and widely used package for costs between $\ell_1$ and $\ell_2$ norms. In
the lasso case ($\gamma=0$), the two algorithms are essentially equivalent.

\section{Weighted-$\ell_1$ approximations to concave penalization}
\label{sec:weighted}

Concave penalties such as the log penalty, which have a gradient that is
decreasing with absolute coefficient size,  yield the `diminishing-bias'
property discussed above. It is {\it the} reason why one would
use concave penalization instead of $\ell_1$ or convex alternatives.

Unfortunately, such penalties can overwhelm the convex likelihood and produce a
nonconvex minimization objective; see Figure \ref{solution}.
This  makes computation difficult.  For example,  one run of SCAD via the
\texttt{ncvreg} R package \citep{breheny_coordinate_2011} for the simulation in
Section \ref{sec:sim} requires around 10 minutes, compared to less than a second for 
lasso (or gamma lasso).  The most efficient exact solver that we've found is the
\texttt{sparsenet} of \citet{mazumder_sparsenet_2011}, also implemented in R,
which first fits a lasso path and, for
each segment on this path, adapts coefficient estimates along a second path of
increasing penalty concavity. However, \texttt{sparsenet}  relies upon sequential solution
over a large set of specifications 
 and its compute cost remains much higher than for the [gamma] lasso.

Local linear approximation \cite[LLA; e.g.,][]{candes_enhancing_2008}
algorithms replace the nonconvex cost function $c$  with its tangent at the
current estimate, $c'(\hat\beta_j)\beta_j$.  The objective is then just a
weighted $\ell_1$ penalized loss. An exact LLA solver iterates between updating
$c'(\hat\beta)$ and solving the implied $\ell_1$ penalized minimization problem.
\citet{zou_one-step_2008} present numerical and theoretical evidence that LLA
can provide near-optimal solutions even if you {\it stop it
after one iteration}. This is an example of one-step estimation \cite[OSE;][]{bickel_one-step_1975}, wherein you take as
your estimator the first step of an iterative approximation to some objective.
Early-stopping can be as good  as the full solution {\it if} the
initial estimates are good enough.

OSE and similar ideas have had a resurgence in the concave penalization
literature recently, motivated by the need for faster estimation algorithms.
\cite{fan_strong_2014} consider early-stopping of LLA for folded concave
penalization  and show that, under strong sparsity assumptions about true
$\bs{\beta}$ and given appropriate initial values, OSE LLA is with high
probability an oracle estimator.   Zhang (2010,2013)
\nocite{zhang_analysis_2010,zhang_multi-stage_2013} investigates  
`convex relaxation' iterations, where estimates under convex regularization
 are the basis for weights in a subsequent penalized objective.
 \cite{wang_calibrating_2013} propose a two step algorithm that feeds lasso
 coefficients into a linear approximation to folded concave penalization.
 These OSE methods are all closely related to  the adaptive lasso
\citep[AL;][]{zou_adaptive_2006}, which does weighted-$\ell_1$ minimization under
weights $\omega_j = 1/|\hat\beta^0_j|$, where $\hat\beta^0_j$ is an initial
guess at the coefficient value.  The original AL paper advocates using MLE
estimates for initial values, while
\cite{huang_adaptive_2008} suggest using marginal correlations
$\hat\beta^0_j = \mr{cor}(\bm{x}_j,\bm{y})$; this marginal AL algorithm is included in our simulations of Section \ref{sec:sim}.

The main point is that OSE LLA, or a two-step estimator starting from
$\bs{\hat\beta}=0$, or any version of the adaptive lasso, are all  interpretable as weighted-$\ell_1$ 
penalization with weights equal to something like $c'(\beta^0)$ for initial
coefficient guess $\beta^0$. The algorithms proposed in Section \ref{sec:algos}, POSE and GL, take advantage of an  available
source of initial values in any path estimation algorithm -- the solved values
from the previous path iteration.  Our  simulations in Section \ref{sec:sim}
show that this efficient strategy works as well or better than  expensive
exact solvers.  In the next section, we provide some theoretical intuition on
why it works.

\subsection{Comparison between weighted-$\bs{\ell_1}$ and $\bs{\ell_0}$ penalization}

Our oracle comparator is estimation under $\ell_0$ costs, $c(\beta_j) =
\ds{1}_{\{\beta_j\neq0\}}$, for which global solution is impractical.  We
treat the design $\bm{X}$ as {\it fixed}, and make no assumptions
about the distribution for $\bm{y} | \bm{X}$ (not even independence between
observations; see remarks).   The question we address is thus more
operational than statistical: for a fixed sample, what is the distance between
an easy-to-find weighted-$\ell_1$ penalized solution and the infeasible
$\ell_0$-penalized optimum?

In the theoretical setups more familiar to statisticians,
one bounds the expected difference  either between fitted coefficients
and some assumed `true' model parameters or  between model predictions and future
response.  Both of these evaluations are important, and they have been studied
extensively elsewhere. The text by \cite{buhlmann_statistics_2011}
includes a comprehensive treatment of such prediction and estimation risk for
$\ell_1$ and weighted-$\ell_1$ penalized linear models,  and we refer the
interested reader there for this material and abundant references to other
relevant work.  This literature usually assumes a truth that is (at least
approximately) linear and sparse in the available covariates.  The assumption
of true sparsity is dubious in many realistic applications, but it is
necessary inasmuch as, with finite data, most parameters in a
high-dimensional model cannot be reliably measured as different from zero.

Instead, we are mostly interested in weighted-$\ell_1$ penalization as a way
to obtain fits that are as sparse as possible without compromising prediction, regardless of whether the data generating process is sparse.  By
comparing to an ideal optimization objective rather than to some true model,
we are able to present a finite-sample fully-nonparametric result with a
straightforward intuitive proof. Obviously, this result is useful only if
an $\ell_0$ penalized estimator would work well for the
problem at hand. However,  there is theoretical support for optimality of
$\ell_0$ penalization in a  variety of high-dimensional prediction settings
\citep[e.g,][]{mallows_comments_1973,efron_estimation_2004}.  Indeed, our
simulation studies show that $\ell_0$ oracles are nearly always the best
performing option in terms of both prediction and estimation error.  In the
case where an $\ell_0$ oracle does poorly, we would usually argue against
penalized linear models as a strategy anyways.

\subsubsection{Approximation to an $\bs{\ell_0}$ oracle}

 For  $S \subset \{1\ldots p\}$ with cardinality $|S|=s$ and complement $S^c =
\{1\ldots p\}\setminus S$, denote vectors restricted to covariates in $S$ as
$\bm{\beta}_S = [\beta_j:j\in S]'$, matrices as $\bm{X}_S$, etc.  Use
$\bs{\beta}^S$ to denote the coefficients for ordinary least-squares (OLS)
restricted to $S$: that is, $\bs{\beta}^S_S =
(\bm{X}_S'\bm{X}_S)^{-1}\bm{X}_S'\bm{y}$ and $\beta^{S}_j = 0~\forall~j\notin
S$.  Moreover, $\bm{e}^S = \bm{y}-\bm{X}\bs{\beta}^S =
(\bm{I}-\bm{H}^S)\bm{y}$ are residuals and $\bm{H}^S =
\bm{X}_S(\bm{X}_S'\bm{X}_S)^{-1}\bm{X}_S'$ the projection matrix from OLS on $S$.  
Use $|\cdot|$ and $\|\cdot\|$ for $\ell_1$ and $\ell_2$ norms.

We use the following result for iterative \textit{stagewise} regression; proof is in the supplement. 
\begin{lemma}\label{SSElemma}
Say $\mr{MSE}_S = \|\bm{X}\bs{\beta}^S-\bm{y}\|^2/n$ and 
$\mr{cov}(\bs{\chi}_j,\bm{e}^S) = \bs{\chi}_j'(\bm{y}-\bm{X}\bs{\beta}^S)/n$ are sample variance and covariances.  Then for any $j \in 1\ldots p$, 
\[
\mr{cov}^2(\bs{\chi}_j,\bm{e}^S) \leq \mr{MSE}_S - \mr{MSE}_{S\cup j}
\]
\end{lemma}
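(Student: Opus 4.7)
The plan is to reduce everything to the standard Frisch–Waugh–Lovell decomposition for the sum of squared residuals after adding one covariate, and then use the fact that the paper normalizes covariates so that $\|\bs{x}_j\|^2\le n$.

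First, I would handle the trivial case $j\in S$. There, the first‑order conditions for $\bs{\beta}^S$ give $\bm{X}_S'\bm{e}^S = \bm{0}$, so $\bs{x}_j'\bm{e}^S=0$ and both sides of the claimed inequality vanish.

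For the substantive case $j\notin S$, I would write the squared‑error criterion for $S\cup j$ as
\[
\min_{\bs{\beta}_S,\beta_j}\|\bm{y}-\bm{X}_S\bs{\beta}_S-\bs{x}_j\beta_j\|^2
\]
and concentrate out $\bs{\beta}_S$ first. This replaces $\bm{y}$ and $\bs{x}_j$ by their projections onto the orthogonal complement of $\bm{X}_S$, namely $\bm{e}^S=(\bm{I}-\bm{H}^S)\bm{y}$ and $\tilde{\bs{x}}_j=(\bm{I}-\bm{H}^S)\bs{x}_j$. Minimizing over the scalar $\beta_j$ then yields the familiar one‑dimensional projection formula
\[
\mr{SSE}_{S\cup j}=\|\bm{e}^S\|^2-\frac{(\tilde{\bs{x}}_j'\bm{e}^S)^2}{\|\tilde{\bs{x}}_j\|^2}.
\]
Idempotency and symmetry of $\bm{I}-\bm{H}^S$ give $\tilde{\bs{x}}_j'\bm{e}^S=\bs{x}_j'\bm{e}^S$, so dividing by $n$ produces the exact identity
\[
\mr{MSE}_S-\mr{MSE}_{S\cup j}=\frac{(\bs{x}_j'\bm{e}^S)^2}{n\,\|\tilde{\bs{x}}_j\|^2}.
\]

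The final, and only nontrivial, step is to bound the denominator. Because $\bm{I}-\bm{H}^S$ is an orthogonal projection, $\|\tilde{\bs{x}}_j\|^2\le\|\bs{x}_j\|^2$, and the paper's standing convention (the scaling footnote in Section~\ref{rp}) gives $\|\bs{x}_j\|^2\le n$. Therefore $\|\tilde{\bs{x}}_j\|^2\le n$, and
\[
\mr{MSE}_S-\mr{MSE}_{S\cup j}\;\ge\;\frac{(\bs{x}_j'\bm{e}^S)^2}{n^2}\;=\;\mr{cov}^2(\bs{x}_j,\bm{e}^S),
\]
which is the claim. The main obstacle is really only bookkeeping: making explicit the normalization $\|\bs{x}_j\|^2\le n$ so that the bound is phrased in terms of the sample covariance rather than the raw inner product. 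Everything else is a one‑line Frisch–Waugh partitioning.
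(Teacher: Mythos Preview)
Your proof is correct and close in spirit to the paper's, but the packaging differs in a way worth noting. The paper argues via \emph{stagewise} regression: it writes the $R^2$ identity for regressing $\bm{e}^S$ on the raw $\bs{x}_j$, uses $\mr{var}(\bs{x}_j)=1$ to get $\mr{cov}^2(\bs{x}_j,\bm{e}^S)=\mr{var}(\bm{e}^S)-\mr{var}(\bm{e}^S-\tilde\beta_j\bs{x}_j)$ with $\tilde\beta_j=\bs{x}_j'\bm{e}^S/\|\bs{x}_j\|^2$, and then invokes suboptimality of the stagewise fit relative to full OLS on $S\cup j$ to bound $\mr{var}(\bm{e}^S-\tilde\beta_j\bs{x}_j)\ge \mr{MSE}_{S\cup j}$. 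You instead go straight to the exact Frisch--Waugh--Lovell identity $\mr{MSE}_S-\mr{MSE}_{S\cup j}=(\bs{x}_j'\bm{e}^S)^2/(n\|\tilde{\bs{x}}_j\|^2)$ and bound the denominator via $\|\tilde{\bs{x}}_j\|^2\le\|\bs{x}_j\|^2\le n$. Unpacked, the two inequalities are the same fact---the paper's ``stagewise is suboptimal'' step is exactly your ``projection shrinks norm'' step---so the arguments are equivalent. Your route has the minor advantage of producing an exact expression for the MSE drop before bounding, which makes the role of the normalization $\|\bs{x}_j\|^2\le n$ completely explicit; the paper's route is a touch quicker because it never needs to introduce $\tilde{\bs{x}}_j$.
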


In addition, we need to define {\it restricted eigenvalues} (RE) on the gram matrix $\bm{X}'\bm{X}/n$.
  This  RE  matches the `adaptive restricted eigenvalues' of \cite{buhlmann_statistics_2011}.  
Similar quantities are common in the theory of regularized estimators; \cite{raskutti_restricted_2010} show that similar conditions hold given  $\omega^{\mr{min}}_{S_c}=1$ with high probability for $\bm{X}$ drawn from a broad class of Gaussian distributions.  \cite{bickel_simultaneous_2009} provide a nice overview of sufficient conditions, and \cite{buhlmann_statistics_2011} have extensive discussion and examples.  
\begin{definition}\label{redef}
The restricted eigenvalue is
$
\phi^2(L,S) = \min_{\{\bm{v}: \bm{v}\neq \bm{0},~|\bm{v}_{S^c}| \leq L\sqrt{s}\|\bm{v}_S\|\}}\frac{\|\bm{X}\bm{v}\|^2}{n\|\bm{v}\|^2}$.
\end{definition}

\noindent Finally,  we bound the distance between prediction
from $\ell_0$ and weighted-$\ell_1$ regularization.  

\begin{theorem} \label{thm:sparseapprox}  Consider squared-error loss
$l(\bs{\beta}) =
\frac{1}{2}\|\bm{X}\bs{\beta}-\bm{y}\|^2$, and suppose $\bs{\beta}^{S}$ minimizes the $\ell_0$ penalized objective $l(\bs{\beta}) + n\nu\sum_{j=1}^p\ds{1}_{\{\beta_j\neq0\}}$ with  $|S|=s<n$.   
Write $\bs{\hat\beta}$ as solution to the weighted-$\ell_1$ minimization $l(\bs{\beta}) + n\lambda\sum_j\omega_j|\beta_j|$. 

Then  
$\omega^{\mr{min}}_{S^c}\lambda > \sqrt{2\nu}$ while $\phi^2(L,S) > 0$, with 
 $L = \frac{\|\bs{\omega}_S\|}{\sqrt{s}}\left(\omega^{\mr{min}}_{S^c}-\sqrt{2\nu}/\lambda\right)^{-1}$, implies
\begin{equation} \label{sparseineq}
\frac{\|\bm{X}(\bs{\hat\beta}-\bs{\beta}^S)\|^2}{n}\leq
\frac{4\lambda^2 \|\bs{\omega}_S\|^2}{\phi^2(L, S)}.
\end{equation} 
\end{theorem}

\begin{proof}
From the definitions of $\bs{\hat\beta}$ and $\bs{\beta}^S$, 
\begin{align}\label{predconv}
\frac{1}{2}\|\bm{X}\bs{\hat \beta}-\bm{y}\|^2 +  n\lambda\sum_j\omega_j|\hat\beta_j|   &= \frac{\|\bm{X}(\bs{\hat\beta}-\bs{\beta}^S)\|^2}{2} + \frac{\|\bm{e}^S\|^2}{2} - \bm{\hat y}'\bm{e}^S+ n\lambda\sum_j\omega_j|\hat\beta_j| 
\\&\leq~~\frac{1}{2}\|\bm{e}^S\|^2 + n\lambda\sum_{j\in S}\omega_j|\beta^S_j|\notag
\end{align}
  Since $\bm{\hat y}'\bm{e}^S = \bm{\hat y}'(\bm{I}-\bm{H}^S)\bm{y} =
\bs{\hat\beta}'\bm{X}'(\bm{y}-\bm{X}\bs{\beta}^S) = 
\sum_{j\in S^c} \hat\beta_j\bs{\chi}_j'(\bm{y}-\bm{X}\bs{\beta}^S)
$,
 we can apply Lemma \ref{SSElemma} followed by $\bs{\beta}^S$ being optimal under $\ell_0$ penalty $\nu$ to get 
\begin{equation} \label{L0ineq}
\left(\frac{\bs{\chi}_j'(\bm{y}-\bm{X}\bs{\beta}^S)}{n}\right)^2
\leq \mr{MSE}_S - \mr{MSE}_{S\cup j} < 2\nu ~~~\forall~j
\end{equation}
so that $|\bm{\hat y}'\bm{e}^S| = |\bs{\hat\beta}_{S^c}\bm{X}_{S^c}'(\bm{y}-\bm{X}\bs{\beta}^S)| < n\sqrt{2\nu}|\bs{\hat\beta}_{S^c}|$.  Applying this inside (\ref{predconv}),
\begin{align}\label{predconvineq}
\frac{1}{2}\|\bm{X}(\bs{\hat\beta}-\bs{\beta}^S)\|^2
  + n\left(\omega^{\mr{min}}_{S^c}\lambda-\sqrt{2\nu}\right)|\bs{\hat\beta}_{S^c}|
  ~\leq~ n\lambda\sum_{j\in S}\omega_j|\hat\beta_{j}-\beta^S_j|
  ~\leq~ n\lambda\|\bs{\omega}_S\|\|\bs{\hat\beta}_{S}-\bs{\beta}^S_S\|.
\end{align}
Given $\omega^{\mr{min}}_{S^c}\lambda > \sqrt{2\nu}$,
difference $\bs{\hat\beta}-\bs{\beta}^S$ is in the RE support for 
$L=\frac{\|\bs{\omega}_S\|}{\sqrt{s}}(\omega^{\mr{min}}_{S^c}-\sqrt{2\nu}/\lambda)^{-1}$ and thus $\|\bs{\hat\beta}_{S}-\bs{\beta}^S_S\| \leq \|\bm{X}(\bs{\hat\beta}-\bs{\beta}^S)/\sqrt{n}\|/\phi(L,S)$.  Finally, applying this inside (\ref{predconvineq}) yields
\begin{align}
\frac{1}{2}\|\bm{X}(\bs{\hat\beta}-\bs{\beta}^S)\|^2
  &~\leq~ \frac{\sqrt{n}\lambda\|\bs{\omega}_S\|\|\bm{X}(\bs{\hat\beta}-\bs{\beta}^S)\|}
  {\phi(L, S)}.
\end{align}
Dividing each side by $\sqrt{n}\|\bm{X}(\bs{\hat\beta}-\bs{\beta}^S)\|/2$ and squaring gives the result.
\end{proof}

\noindent {\bf Remarks}

\noindent \vskip .2cm \noindent $\bullet$~~  Theorem \ref{thm:sparseapprox} is finite sample exact.  Distinguishing it from
related results in the literature, it is  also completely
\textit{non-parametric} -- it makes no reference to the true distribution of
$\bm{y}|\bm{X}$.  Indeed, if we make such assumptions, Theorem
\ref{thm:sparseapprox} provides bounds on the distance between a
weighted-lasso and optimal prediction.  The next remark is an example.

\noindent \vskip .2cm \noindent $\bullet$~~ Assume that $\bm{y} \sim
(\bs{\eta},\sigma^2\bm{I})$ --  $y_i$ independent with mean $\mu_i$ and shared
variance $\sigma^2$.  The $C_p$ formula of
\cite{mallows_comments_1973} gives 
$\mr{MSE}_S + 2s\sigma^2/n$ as an unbiased estimate of residual variance.
Following \cite{efron_estimation_2004}, this implies $\nu = \sigma^2/n$ is the
optimal $\ell_0$ penalty for minimizing prediction error. Theorem
(\ref{thm:sparseapprox}) applies directly,  with $L =
\frac{\|\bs{\omega}_S\|}{\sqrt{s}}\left(\omega^{\mr{min}}_{S^c}-
\sqrt{2}\sigma/(\lambda\sqrt{n})\right)^{-1}$, to give a bound on the distance
between weighted-$\ell_1$ estimation and $C_p$-optimal prediction. Note that,
since the condition on minimum $S^c$ weights has become
$\omega^{\mr{min}}_{S^c} > (\sigma/\lambda)\sqrt{2/n}$, comparison to $C_p$
suggests we can use larger $\gamma$  with large $n$ or small $\sigma$.

\noindent \vskip .2cm \noindent $\bullet$~~  Plausibility of the restricted eigenvalue assumption
$\phi(L,S) > 0$ depends  upon $L$.  It is less restrictive if we can reduce
$\|\bs{\omega}_S\|$ without making $\omega^{\mr{min}}_{S_c}$ small.
\textit{This is a key motivation for the POSE algorithm:} if covariates with
nonzero $\hat \beta_j$ for large $\lambda$ (i.e., early in the path) can be
assumed to live in $S$, then increasing $\gamma >0$ will improve prediction.
Of course, the moment that $\lambda$ becomes small enough that elements of
$S_c$ get nonzero $\hat\beta$, then larger $\gamma$ can lead to over-fit.  For
this reason it is essential that we have tools for choosing  optimal
$\lambda$.   In the following sections, we describe both cross-validation and information criteria
for penalty selection.

\noindent \vskip .2cm \noindent $\bullet$~~  For the lasso, $\bs{\omega}=\bm{1}$ and $\|\bm{X}(\bs{\hat\beta}-\bs{\beta}^S)\|^2/n \leq
4\lambda^2 s/\phi^2(L, S)$ with $L = (1 - \sqrt{2\nu})^{-1}$.  This bound depends only upon $s$, but forcing $\phi^2(L,S) > 0$ becomes more restrictive for larger $p$.

\noindent \vskip .2cm \noindent $\bullet$~~ There is no notion of a `true'
model here and this result has nothing to say about the estimation error on
individual parameters.  We again refer the reader to
\cite{buhlmann_statistics_2011}, and note that for minimizing 
estimation error they recommend $\ell_1$ weights derived
from lasso fits at smaller $\lambda$ (i.e., the opposite of GL and POSE, where weights are derived from fits at slightly larger
$\lambda$).   However, we  consider estimation error in our simulation study
and find that the GL algorithms do well compared to MCP and the adaptive
lasso. Moreover, in the supplemental material we adapt standard results from
\cite{wainwright_sharp_2006,wainwright_sharp_2009} to show how reducing
$\|\bs{\omega}_S\|$ without making $\omega^{\mr{min}}_{S_c}$ small can lead to
lower false discovery rates with respect to the $\ell_0$ oracle.
Unfortunately, this relies upon fairly strict design restrictions.

\section{Penalty selection}
\label{sec:select}

Lasso, the gamma lasso, and related 
sparse regularization estimators do not actually {\it do} model selection; rather, they
can be used to obtain paths of estimates corresponding to different levels of
penalization.  Each penalty level corresponds to a different `model'
and we must select the optimal choice from these candidates.

$K$-fold cross-validation \cite[CV; e.g.,][]{efron_estimation_2004}  is the
most common technique for penalty selection, and it does a good job.  However,
there are many scenarios where we might want an analytic alternative to CV.
For example, if a single fit is expensive then doing it $K$ times will be
impractical.  More subtly, truly Big Data are distributed.  Algorithms can be
designed to work in parallel on subsets
\citep[e.g.,][]{taddy_distributed_2015} but a bottleneck results if you need
to communicate across machines for CV experimentation. Finally, CV can lead to
over-fit for unstable algorithms whose results change dramatically in response
to data jitter; see \cite{breiman_heuristics_1996} for a classic discussion and the supplement for an overview.

An important feature of the standard $\ell_1$ lasso is that it comes with a
simple approximation for the estimation degrees-of-freedom (\textit{df}) at
any $\lambda$: the number of nonzero estimated coefficients
\citep[see][]{zou_degrees_2007}.  These \textit{df} can combined with the fitted
deviance  to create information criteria, such as the AIC or BIC, that provide alternative tools for penalty selection.

This section derives the gamma lasso as approximately maximizing the posterior
for a hierarchical Bayesian model, and uses this
interpretation to obtain a heuristic degrees-of-freedom for each estimate
along the GL path.  These GL \textit{df} can  be input to information criteria
for model selection. In particular, our extensive simulations demonstrate that the GL
\textit{df} can be used with  the corrected AICc of
\citet{hurvich_regression_1989}  to obtain out-of-sample predictive
performance that is as good or better than that from cross-validated
predictors.

\subsection{Bayesian model interpretation}

Consider a model where each $\beta_j$ is
assigned a Laplace distribution prior with scale $\tau_j>0$,
\begin{equation}\label{blasso}
\beta_j \sim \mr{La}\left(\tau_j\right) =
\frac{\tau_j}{2}\exp\left[ -\tau_j|\beta_j| ~\right].
\end{equation}
Typically, scale parameters $\tau_1 =
\ldots = \tau_p$ are set as a single value, say $n\lambda/\phi$ where
 $\phi$ is the dispersion (e.g. Gaussian variance
$\sigma^2$ or 1 for the binomial).   Posterior
maximization under the prior in (\ref{blasso}) is $\ell_1$ regularized estimation \citep[e.g.,][]{park_bayesian_2008}.

Instead of a single shared scale, assume an independent gamma
$\mr{Ga}(s,1/\gamma)$ hyperprior with `shape' $s$ and `scale' $\gamma$ for
each $\tau_j$, such that $\ds{E}[\tau_j] = s\gamma$ and $\mr{var}(\tau_j) =
s\gamma^2$.  The {\it joint} coefficient-scale prior is
\begin{align}\label{glprior}
\pi(\beta_j,\tau_j) = \mr{La}\left(\beta_j ;~ \tau_j\right)
\mr{Ga}\left(\tau_j;~ s,\gamma^{-1}\right) = \frac{ 1}{2\Gamma({s})} 
\left(\frac{\tau_j}{\gamma}\right)^{s}
               \exp\left[-\tau_j(\gamma^{-1}+|\beta_j|)\right].
\end{align}
The gamma hyperprior is conjugate here, implying a $\mr{Ga}\left(s+1, ~1/\gamma +
|\beta_j|\right)$ posterior for $\tau_j \mid \beta_j$ with conditional
posterior mode (MAP) at $\hat\tau_j = \gamma s/(1 + \gamma |\beta_j|)$.

Consider joint MAP estimation of $[\bs{\tau},\bs{\beta}]$ under the prior in
   (\ref{glprior}), where we've suppressed $\alpha$ for simplicity. By taking
   negative logs and removing constants, this is equivalent to solving
\begin{equation}\label{gljoint}
\argmin_{\beta_j\in\ds{R},~\tau_j \in \ds{R}^{+}}\!\!
\frac{l(\bs{\beta})}{\phi} + \sum_j \left[\tau_j(\gamma^{-1}+|\beta_j|) - s\log(\tau_j)\right].
\end{equation}
It is straightforward to show (supplement) that the $\bs{\beta}$ which solves (\ref{gljoint})  is also the solutions 
to the log-penalized objective 
\begin{equation}\label{logobj}
\argmin_{\beta_j\in\ds{R}}~~
\phi^{-1}l(\bs{\beta}) + \sum_j  s\log(1+\gamma|\beta_j|),
\end{equation}
such that the log penalty is interpretable as a {\it profile} MAP estimate.

\subsection{Degrees of freedom}

For prediction rules, say $\hat y_i$, that are suitably stable \citep[i.e.,
Lipschitz; see][]{zou_degrees_2007}, the SURE framework of
\cite{stein_estimation_1981} applies and  $df =
\ds{E}\left[\sum_i \partial \hat y_i/\partial y_i\right]$.
Consider  a single coefficient $\beta$ estimated via least-squares under $\ell_1$
penalty $\tau$.   Write gradient at zero $g = -\sum_i x_iy_i$ and curvature $h
= \sum_i x_i^2$ and set $\varsigma = -{\tt sign}(g)$. The prediction rule is
$\hat y = x(\varsigma/h)(|g|-\tau)_+$ with  derivative $\partial\hat y_i/\partial y = x_i^2/h \ds{1}_{[|g|>\tau]}$, so that the SURE expression
yields $df = \ds{E}\left[ \ds{1}_{[|g|>\tau]} \right]$.   This expectation is
taken with respect to the {\it unknown true} distribution over $\bm{y} |
\bm{X}$, not that estimated from the observed sample.  However, 
 one can evaluate this expression at observed
gradients as an unbiased estimator for the true \textit{df} \citep[e.g.,][]{zou_degrees_2007}.

This motivates our heuristic $df$ in weighted-$\ell_1$ regularization:  the {\it prior} expectation for the number  $\ell_1$ penalty dimensions, $\tau_j = \lambda \omega_j$, that are less than their corresponding absolute gradient dimension.  Referring to our Bayesian model above, each $\tau_j$ is $iid$ $\mr{Ga}(s,1/\gamma)$ in the prior,
leading to the GL degrees of freedom
\begin{equation}
\label{edf} df^t = \sum_j \mr{Ga}(|g_{j}|;~n\lambda^t/(\gamma\phi),
1/\gamma), \end{equation} where $\mr{Ga}(~\cdot~;~\mr{shape}, 1/\mr{scale})$
is the Gamma cumulative distribution function and $g_j$ is an estimate of  the
$j^{th}$ coefficient gradient evaluated at $\hat\beta_j=0$. Note that
the number of unpenalized variables (e.g., 1 for $\alpha$) should also be
added to the total  estimation $df$.  For  orthogonal covariates,  $g_j$ is
just the marginal gradient at zero. In the non-orthogonal case, where $g_{j} =
g_j(0)$ becomes a function of all of the elements of $\bs{\hat\beta}$, we plug
in the most recent $g_j$ at which $\hat\beta^t_j=0$:  this requires no extra
computation and has the advantage of maintaining $df^t =
\hat p^t$ for $\gamma = 0$.

\subsection{Selection via information criteria}

An information criterion is an attempt to approximate
 divergence between the unknown true data generating
process and our parametric approximation; see the supplement for an
overview.  These  take the form
\begin{equation}
l(\bs{\hat\beta}) + k(df)
\end{equation}
where $k$ is the cost on the degrees-of-freedom associated with
 $\bs{\hat\beta}$ and $l$ is the negative log likelihood.  The AIC of
\cite{akaike_information_1973} uses $k(df) = df$ while the BIC of \cite{schwarz_estimating_1978}
uses $k(df) = \log(n)df/2$. As detailed in \cite{flynn_efficiency_2013}, the
corrected AICc with $k(df) = df\times n/(n-df-1)$ does a better job than the
AIC or BIC in choosing the optimal model for prediction when $df$ is large.
Alternatively, the BIC is often preferred for accurate support recovery or
avoiding false discovery; see, e.g.,
\cite{zou_degrees_2007}.

\section{Simulation experiment}
\label{sec:sim}

We consider continuous-response data simulated from
 a $p=1000$ dimensional linear model
\begin{align}\vspace{-.1cm}
\label{simdgp}
y &\sim \mr{N}\left(\bm{x}'\bs{\beta},\sigma^2\right) ~~\text{where}~~
\beta_j = (-1)^j\exp\left(-\frac{j}{{\kappa}} \right)\ds{1}_{[j\leq J]}~~\text{for}~~j=1\dots p~~.
\end{align}\vspace{-.1cm}
Each simulation draws $n$ means $\eta_i =
\bm{x}_i'\bs{\beta}$ and two independent samples 
$\bm{y},\bm{\tilde y} \sim \mr{N}(\bs{\eta},\sigma^2\bm{I})$;  the first
sample is used for training and we evaluate prediction error on the second
sample.  Our experiment includes all possible combinations of the following configuration options:
\begin{itemize}\vspace{-.1cm}\setlength\itemsep{0em}
\item the sample size is $n=100$ or $n=1000$;
\item the simulation models is either {\it dense}, with $J=p$ so that all true coefficients are nonzero, or 
  {\it sparse}, with $J = n/10$ for either 10 or 100 nonzero coefficients;
\item defining $\bm{z}_i \sim \mr{N}\left(\bm{0},\bs{\Sigma}\right)$ for $i=1\ldots n$, the regression inputs $\bm{x}_i$ are generated as either \textit{continuous} $x_{ij}=z_{ij}$  or \textit{binary}  $x_{ij} \stackrel{ind}{\sim} \mr{Bern}\left( 1/(1+e^{-z_{ij}})\right)$;
\item error variance $\sigma^2$ is defined through {\it signal-to-noise} (\textit{s2n})
ratios $\mr{sd}(\bs{\eta})/\sigma$ of $1/2$, $1$, or $2$;
\item design multicollinearity is parametrized via $\Sigma_{jk} =
\rho^{|j-k|}$, with $\rho$ of $0$, $0.5$, or $0.9$;
\item the rate of coefficient decay is specified by
${\kappa}$ of $10$, $50$, $100$, or $200$.  
\end{itemize}\vspace{-.1cm}
This implies a total of 288 different models, and we simulate and estimate 1000 times for each.

\begin{figure}[tb]
{\small \hskip 2.5cm$\bs{\gamma=0}$\hskip 4cm$\bs{\gamma=1}$\hskip 4cm$\bs{\gamma=10}$}

\includegraphics[width=\textwidth]{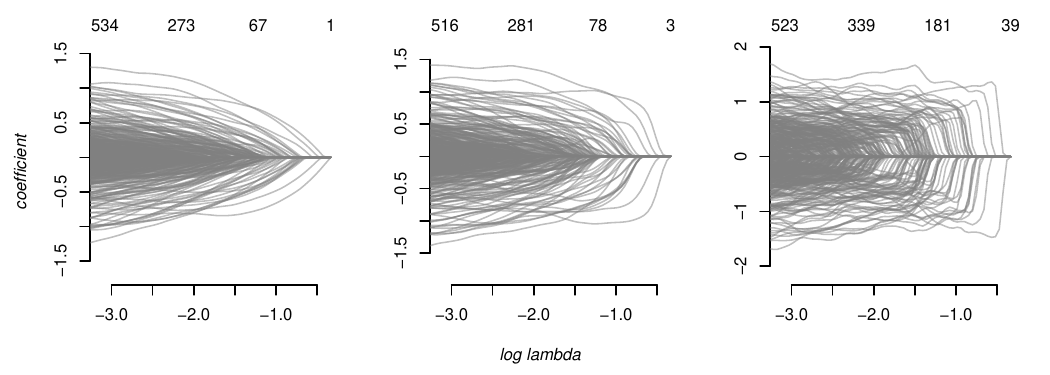}
\caption{\label{simpaths} Regularization paths for simulation example.
Degrees of freedom $df^t$ are along the top. }
\vskip .3cm
{\small \hskip 2.5cm$\bs{\gamma=0}$\hskip 4cm$\bs{\gamma=1}$\hskip 4cm$\bs{\gamma=10}$}

\includegraphics[width=\textwidth]{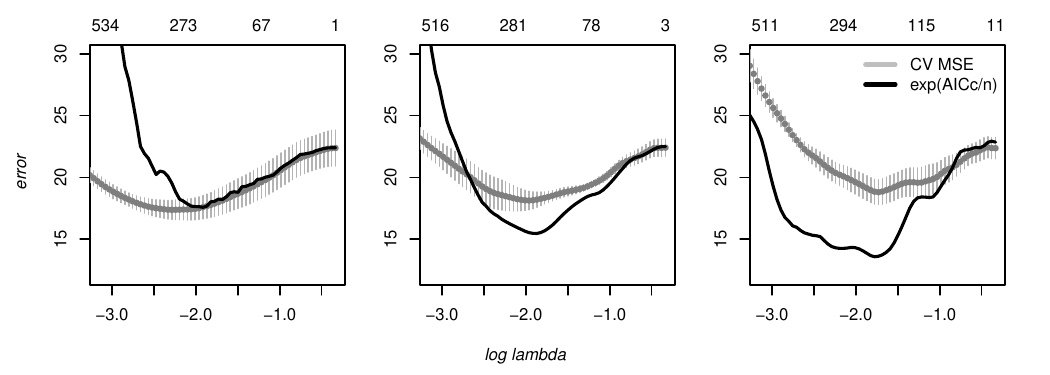}
\caption{\label{simcv} 5-fold CV and AICc for a simulation example. 
Points-and-bars show mean OOS MSE $\pm 1$se.  }
\end{figure}

Our simulation models include various levels for both true sparsity --
adjusted through the threshold $J$ -- and the {\it effective sparsity} dictated by
$\kappa$, which controls the rate of coefficient decay. At ${\kappa}=10$ only
23 coefficients have absolute value larger than 0.1, while at ${\kappa=200}$
there are 460 coefficients larger than this threshold. The strictly dense $J=p$
models have all nonzero  true coefficients but it will be useless to estimate many of them
when $p=n$ or $p>n$.

Figures \ref{simpaths} and \ref{simcv} illustrate GL paths for a single
dataset, generated from our {\it dense} model with {\it binary} design,   $\mr{sd}(\bm{\eta})/\sigma=1$,  $\rho=0.9$, and  ${\kappa}=50$. In Figure
\ref{simpaths}, increasing $\gamma$ leads to `larger shouldered' paths where
estimates move quickly to MLE for the nonzero-coefficients. Degrees of
freedom, calculated as in (\ref{edf}), are along the top of each plot; at all but the smallest $\lambda$ values, equal
$\lambda$ have higher $df^t$ for higher $\gamma$ since there is less shrinkage
of $\hat\beta_j\neq0$.   This relationship switches only when $df^t$
nears $n$, indicating that the heuristic in (\ref{edf}) might underestimate $df$ for clearly over-fit models.
Figure \ref{simcv} shows CV and AICc error estimates and we see that 
the two criteria roughly track each other.  Notice that
for larger $\gamma$ the CV error increases more quickly away from
its minimum; this is predicted by Theorem \ref{thm:sparseapprox} and shows that the consequences of over fit are worse with
faster diminishing-bias.  Computation times also increase with larger $\gamma$, although a
single run at $\gamma=10$ still requires less than a second.

Our simulation experiment includes \texttt{gamlr} runs of GL with $\gamma$
of 0 (lasso), 1, and 10. For penalty selection, we focus on AICc with the GL $df$
from Section \ref{sec:select} as well as 5-fold CV (with $\lambda$ selected to
{\it minimize} CV error estimates); results for additional selection rules are in the supplement. 
We also consider `GL-select', a routine which chooses $\gamma \in \{0,1,10\}$ to minimize these AICc or CV values.
Predictive performance is measured through the average root
mean square error (RMSE) for  $\bm{\hat y}$ on $\bm{\tilde y}$, and RMSE values are reported  in
terms of percentage-worse than the oracle-support MLE procedure.  We include the
oracle regression $R^2$ for reference.

We include performance results for MLE fit on `oracle' restricted support.  For the strictly sparse model, with $J=n/10$, our oracle uses the true nonzero support.  For the strictly dense model, our oracle comparator is the $C_p$ optimal $\ell_0$ penalized solution
\begin{equation}\label{l0oracle}
\bs{\beta}^{\star} = \argmin_{\bs{\beta}} \left\{ \|\bm{y}-\bm{X}\bs{\beta}\|^2 + 2\sigma^2\sum_j
\ds{1}_{\{\beta_j\neq0\}}\right\},
\end{equation} which we solve by searching through
OLS on $\bm{X}_{\{1\ldots j\}}$ for $j=1\ldots p$ (since the true coefficients are ordered).
See \cite{mallows_comments_1973} and remarks after Theorem \ref{thm:sparseapprox} for background on this oracle.

As a `cheap' comparator, with run-time similar to that of GL, we
consider the {\it marginal} adaptive lasso with $\ell_1$ penalty
weights $\omega_j
\propto |\mr{cor}(\bm{x}_j,\bm{y})|^{-1}$.   The AL weights are scaled so that
$\mathrm{min}(\omega_j)=1$ and we set $df$ as
the number of nonzero estimated parameters.   As an `expensive' gold standard,
we include an exact solver for MCP penalized regression. Described in Section \ref{sec:weighted}, {\tt sparsenet} applies 5-fold CV to
optimize out-of-sample error over a dense grid of potential penalty sizes
($\lambda$) and concavities (analogous to our $\gamma$) for the MCP penalty.

Table \ref{tab:sumtables} presents a summary of predictive performance over
all simulation models, Figure \ref{fig:weights}
plots some of the quantities from Theorem \ref{thm:sparseapprox} for different algorithms, and Table \ref{tab:esttables} summarizes estimation error against true coefficients in our sparse simulation model.
These results  represent only a small portion of the simulation
study.  We have aggregated across different covariate designs (binary or
continuous, with various levels of multicollinearity) and have combined
results from $\kappa \in \{10,50\}$ as  `fast' decay and those for $\kappa \in
\{100,200\}$ as `slow' decay.  
The supplement contains an additional
128 tables, detailing hundreds of data generating processes and
algorithm configurations, with results on prediction and estimation error, on the fitted number of nonzero
parameters, and on sensitivity and false discovery rates.

\begin{table}
\footnotesize
\begin{center}
\begin{tabular}{cc|cc|cc|cc|cc|cc|c|c}
& & \multicolumn{11}{l|}{\bf \% Worse than Oracle RMSE } & \\[1ex]
& \multirow{2}{*}{$\displaystyle\frac{\mathrm{sd}(\boldsymbol{\eta})}{\sigma}$} 
& \multicolumn{2}{c}{lasso} 
& \multicolumn{2}{c}{GL $\gamma=1$} 
& \multicolumn{2}{c}{GL $\gamma=10$} 
& \multicolumn{2}{c}{GL-select} 
& \multicolumn{2}{c}{ adapt. lasso} 
& \multicolumn{1}{c|}{~} & \it Oracle \\[-0.5ex]
& 
& ~~\scriptsize\it AICc & \multicolumn{1}{c}{\scriptsize\it CV~~}
& ~~\scriptsize\it AICc & \multicolumn{1}{c}{\scriptsize\it CV~~}
& ~~\scriptsize\it AICc & \multicolumn{1}{c}{\scriptsize\it CV~~}
& ~~\scriptsize\it AICc & \multicolumn{1}{c}{\scriptsize\it CV~~}
& ~~\scriptsize\it AICc & \multicolumn{1}{c}{\scriptsize\it CV~~} 
& \multicolumn{1}{c|}{ MCP} & $R^2$ \\[1ex]
\hline
\multicolumn{2}{l|}{\it dense model,} &&&&&&&&&&&\\
\multicolumn{2}{l|}{\it fast decay} &&&&&&&&&&&\\& \it  2  & 11 & 9 & 9 & 7 & 7 & 7 & 7 & 7 & 13 & 12 & {\bf 6} & \it  0.78 \\
\it n=1000  & \it  1  & 9 & 8 & 8 & 7 & 9 & 8 & 8 & 7 & 8 & 8 & {\bf 6} & \it  0.46 \\
& \it  0.5  & {\bf 4} & {\bf 4} & 5 & 5 & 8 & 6 & 5 & {\bf 4} & 6 & 7 & {\bf 4} & \it  0.15 \\[1ex]
\cline{2-2}\rule{0pt}{3ex}& \it  2  & 51 & 46 & 38 & 54 & 13 & 61 & 19 & 47 & 27 & {\bf 10} & 46 & \it  0.68 \\
\it n=100  & \it  1  & 12 & 12 & 12 & 14 & 16 & 15 & 14 & 13 & {\bf 6} & 12 & 12 & \it  0.29 \\
& \it  0.5  & {\bf 0} & {\bf 0} & 14 & {\bf 0} & 26 & 1 & 21 & 1 & 3 & 19 & 1 & \it  0.00 \\[1ex]
\hline\multicolumn{2}{l|}{\it dense model,} &&&&&&&&&&&\\
\multicolumn{2}{l|}{\it slow decay} &&&&&&&&&&&\\& \it  2  & 23 & {\bf 9} & 17 & 10 & 10 & 21 & 10 & {\bf 9} & 20 & 16 & 10 & \it  0.73 \\
\it n=1000  & \it  1  & 12 & 10 & 11 & 13 & 15 & 21 & 11 & 10 & {\bf 9} & {\bf 9} & 10 & \it  0.37 \\
& \it  0.5  & {\bf 2} & 3 & 4 & 3 & 4 & 3 & 3 & 3 & 3 & 4 & 3 & \it  0.07 \\[1ex]
\cline{2-2}\rule{0pt}{3ex}& \it  2  & 47 & 45 & 12 & 54 & {\bf 0} & 56 & 7 & 46 & 23 & 1 & 45 & \it  0.59 \\
\it n=100  & \it  1  & 2 & 3 & 3 & 5 & 5 & 5 & 5 & 4 & {\bf -3} & 1 & 3 & \it  0.09 \\
& \it  0.5  & {\bf -2} & {\bf -2} & 19 & {\bf -2} & 23 & {\bf -2} & 19 & -1 & 1 & 17 & {\bf -2} & \it  -0.06 \\[1ex]
\hline\multicolumn{2}{l|}{\it sparse model,} &&&&&&&&&&&\\
\multicolumn{2}{l|}{\it fast decay} &&&&&&&&&&&\\& \it  2  & 10 & 9 & 8 & 7 & 6 & 6 & 7 & 7 & 12 & 12 & {\bf 5} & \it  0.78 \\
\it n=1000  & \it  1  & 8 & 7 & 7 & 6 & 9 & 7 & 8 & 6 & 7 & 7 & {\bf 5} & \it  0.45 \\
& \it  0.5  & {\bf 3} & {\bf 3} & {\bf 3} & {\bf 3} & 6 & 4 & {\bf 3} & {\bf 3} & 4 & 5 & {\bf 3} & \it  0.12 \\[1ex]
\cline{2-2}\rule{0pt}{3ex}& \it  2  & 49 & 41 & 46 & 40 & 36 & 56 & 38 & 38 & 33 & {\bf 27} & 37 & \it  0.77 \\
\it n=100  & \it  1  & 24 & 24 & 27 & 26 & 32 & 30 & 30 & 24 & {\bf 18} & 26 & 24 & \it  0.44 \\
& \it  0.5  & {\bf 6} & {\bf 6} & 14 & 7 & 33 & 7 & 27 & 7 & 9 & 26 & 7 & \it  0.10 \\[1ex]
\hline\multicolumn{2}{l|}{\it sparse model,} &&&&&&&&&&&\\
\multicolumn{2}{l|}{\it slow decay} &&&&&&&&&&&\\& \it  2  & 14 & 12 & 10 & 9 & 6 & 5 & 6 & 5 & 17 & 17 & {\bf 4} & \it  0.78 \\
\it n=1000  & \it  1  & 14 & 13 & 13 & 13 & 16 & 20 & 14 & 13 & 13 & 13 & {\bf 12} & \it  0.45 \\
& \it  0.5  & {\bf 5} & {\bf 5} & 6 & 6 & 7 & 7 & {\bf 5} & {\bf 5} & 6 & 7 & {\bf 5} & \it  0.12 \\[1ex]
\cline{2-2}\rule{0pt}{3ex}& \it  2  & 52 & 43 & 45 & 43 & 35 & 67 & 39 & 41 & 34 & {\bf 28} & 40 & \it  0.77 \\
\it n=100  & \it  1  & 25 & 25 & 28 & 29 & 33 & 32 & 31 & 25 & {\bf 19} & 27 & 26 & \it  0.44 \\
& \it  0.5  & {\bf 6} & 7 & 18 & 7 & 33 & 7 & 28 & 7 & 9 & 26 & 7 & \it  0.10 \\[1ex]
\hline\end{tabular}
\end{center}
\caption{\label{tab:sumtables} Out-of-sample predictive RMSE, 
reported as  \% worse than oracle (corresponding $R^2$ on far right),
averaged over 1000  samples from various configurations of (\ref{simdgp}).
The dense models have $J=p$ and the sparse models have $J=n/10$.  Fast decay includes $\kappa \in \{10,50\}$, while slow decay is $\kappa \in
\{100,200\}$.
The oracle is MLE fit either on  $C_p$-optimal support for the dense model or
on the true support for the sparse model. Each row of this table corresponds
to average performance across many data generating processes; see the
supplement for more detailed results.  Lasso (GL $\gamma=0$), GL, and AL
routines were executed in {\tt gamlr}.  MCP denotes results from the {\tt
sparsenet} MCP solver. GL-select chooses amongst $\gamma \in \{0,1,10\}$ using
either AICc or CV. The best results are bolded.}
\end{table}

\begin{table}[p]
\footnotesize
\begin{center}
\begin{tabular}{cc|cc|cc|cc|cc|c|c}
& & \multicolumn{9}{l}{\bf Estimation RMSE on true coefficients} & \\[1ex]
& \multirow{2}{*}{$\displaystyle\frac{\mathrm{sd}(\boldsymbol{\eta})}{\sigma}$} 
& \multicolumn{2}{c}{lasso} 
& \multicolumn{2}{c}{GL $\gamma=1$} 
& \multicolumn{2}{c}{GL $\gamma=10$} 
& \multicolumn{2}{c}{ adapt. lasso} 
& \multicolumn{1}{c}{~} & \\[-0.5ex]
& 
& ~~\scriptsize\it AICc & \multicolumn{1}{c}{\scriptsize\it CV~~}
& ~~\scriptsize\it AICc & \multicolumn{1}{c}{\scriptsize\it CV~~}
& ~~\scriptsize\it AICc & \multicolumn{1}{c}{\scriptsize\it CV~~}
& ~~\scriptsize\it AICc & \multicolumn{1}{c}{\scriptsize\it CV~~}
& \multicolumn{1}{c}{ MCP} & \it Oracle \\[1ex]
\hline\multicolumn{2}{l|}{\it sparse model} &&&&&&&&&\\& \it  2  & 0.05 & 0.05 & 0.04 & 0.03 & 0.01 & {\bf 0} & 1.78 & 1.77 & {\bf 0} & \it  0.00 \\
\it n=1000  & \it  1  & 0.12 & 0.12 & {\bf 0.11} & {\bf 0.11} & 0.13 & 0.12 & 3.41 & 3.41 & {\bf 0.11} & \it  0.02 \\
& \it  0.5  & {\bf 0.15} & {\bf 0.15} & {\bf 0.15} & {\bf 0.15} & 0.17 & {\bf 0.15} & 6.47 & 6.54 & {\bf 0.15} & \it  0.10 \\[1ex]
\cline{2-2}\rule{0pt}{3ex}& \it  2  & 0.05 & 0.04 & {\bf 0.03} & 0.04 & 0.04 & 0.06 & 1.07 & 1.02 & 0.04 & \it  0.00 \\
\it n=100  & \it  1  & {\bf 0.07} & 0.08 & 0.08 & 0.08 & 0.12 & 0.08 & 1.89 & 2.02 & 0.08 & \it  0.00 \\
& \it  0.5  & {\bf 0.08} & {\bf 0.08} & 0.12 & {\bf 0.08} & 0.2 & {\bf 0.08} & 3.49 & 4.03 & {\bf 0.08} & \it  0.02 \\[1ex]
\hline\end{tabular}
\end{center}
\vskip -.25cm
\caption{\label{tab:esttables}Summary of estimation RMSE against the 
true coefficients from our {\it sparse} simulation model  (where
 estimation of the true coefficients is a well-posed task). Results are
 averages over 1000 samples from each of the possible data generating process
 configurations, including both binary and continuous designs, $\rho$ in \{0,
 0.5, .9\} and all of our decay values. The oracle is MLE fit on the true
 sparse support and the best results are bolded.}
\end{table}

\begin{figure}[p]
$\boldsymbol{{\kappa} = 10}$

~~~~~~\includegraphics[width=.9\textwidth]{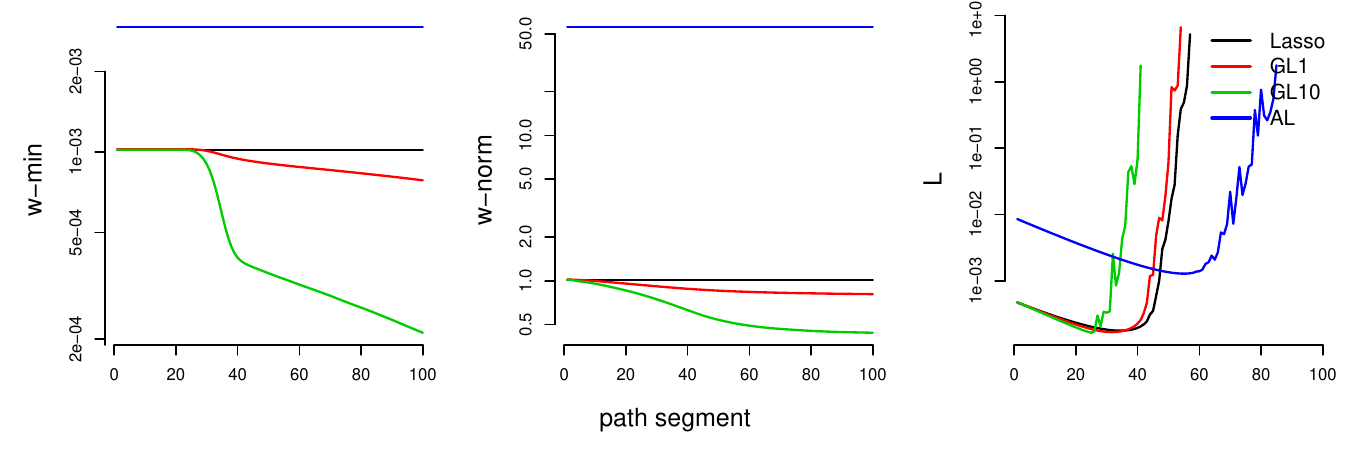}



$\boldsymbol{{\kappa} = 100}$

~~~~~~\includegraphics[width=.9\textwidth]{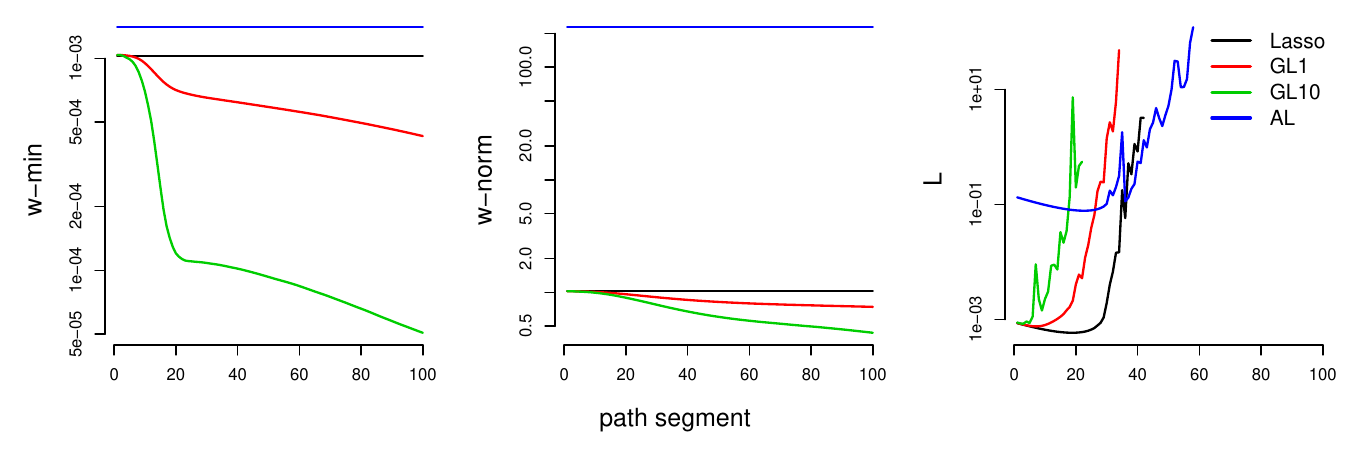}


\caption{\label{fig:weights} Properties of the weighted $\ell_1$ penalized paths in our simulation, averaged across 1000 samples of $n=1000$ for the {\it dense} model with {\it binary} design, $\rho=0.9$, and $\mathrm{sd}(\bs{\eta})/\sigma=1$.
The {\sf\footnotesize w-min} values are $\omega^{\mr{min}}_{S^c}$, {\sf\footnotesize w-norm} are $\|\bs{\omega}_S\|$, and {\sf\footnotesize L} is the restricted eigenvalue constant; all as defined in Theorem \ref{thm:sparseapprox}.  Note that, by construction, the $\lambda$ grids are the same across all algorithms for the same data sample.} 
\end{figure}

\subsubsection*{Remarks}

{\it Predictive performance}

\vskip .2cm \noindent $\bullet$~~ AICc selection for one of the three GL $\gamma$-specifications is
always able to provide predictive performance near to that of the
computationally intensive routine of MCP (with CV selection over a grid in penalty
size and scale).  Looking at the lasso, $\gamma=1$, and $\gamma=10$ columns in
Table \ref{tab:sumtables},  the best OOS RMSE is never more than 2\% worse
than MCP.  Note that MCP is only a dominant performer when the
sample size is large ($n=1000$) and the true model is effectively sparse
(either strictly sparse, or dense with fast decay).

\vskip .2cm \noindent $\bullet$~~   
For $n=1000$, a single {\tt gamlr} run requires a fraction of a second; {\tt
sparsenet} with 5-fold CV usually required 15-20 seconds per run, and
occasionally much longer. The AICc version of GL-select is
within 2\% of the MCP RMSE {\it except} when $n=100$ and signal-to-noise (\textit{s2n}) is 0.5 for the dense model or 0.5-1
for the sparse model.  Thus AICc selection on three GL paths (potentially run
in parallel), with combined compute time still a fraction of a second,  yields
a best or near-best predictor in all scenarios but for these small-data low-signal
settings.  

\vskip .2cm \noindent $\bullet$~~  CV GL-select and MCP provide very similar
performance: their RMSE is always within 1\% of each other, with MCP usually
better at $n=1000$ and GL-select usually better at $n=100$. The combined cost
of each GL CV routine is still far less than a single MCP run, and GL-select
has the advantage that each path -- across different folds and  $\gamma$ --
can be run in parallel.

\vskip .2cm \noindent $\bullet$~~  The relationship between CV and AICc
results for GL is dependent upon $\gamma$, signal strength, and sample size.
For small $n=100$ datasets with strong signal ($s2n=2$), AICc does generally
 better than CV; e.g., this is the one setting where AICc GL-select outperforms MCP by a large margin.  For $n=100$ samples with weak signal ($s2n\leq
1$), the opposite is true and CV outperforms AICc if $\gamma > 0$.  With
larger $n=1000$ samples, AICc and CV perform more similarly {\it except} in
the effectively dense model.  

In summary,  CV and AICc give similar results
whenever $n/s$ is large (regardless of $p$).  CV is safer  when $n/s$ is
small and there is little signal (in this situation even the oracle yields small
or negative $R^2$), while AICc is best if $n/s$ is small but there is strong signal.  The supplement shows that AICc is almost
always a massive improvement over either BIC or  AIC for prediction.

\newpage
\noindent \textit{Estimation performance and information compression}

\vskip .2cm \noindent $\bullet$~~ Table \ref{tab:esttables} shows mostly
similar estimation errors between the GL and MCP comparators. MCP dominates only in the $n=1000$ samples with strong $s2n=2$ signal.  The simple lasso is also competitive, despite its non-diminishing bias, in all but the larger samples with $s2n \geq1$.

\vskip .2cm \noindent $\bullet$~~ Marginal AL performed well in prediction,
with RMSE that was near to that of the lasso when $n=1000$ and
sometimes much better than MCP or GL when $n=100$.  However, Table \ref{tab:esttables}
shows {\it terrible} performance in terms of estimation error for marginal AL.
It is interesting that marginal AL outperforms MCP in prediction when $n=100$ and
$s2n\geq1$, but has estimation RMSE that is an order of magnitude larger in the
same scenario.

\vskip .2cm \noindent $\bullet$~~  In the supplement, CV and AICc selected GL1 has usually 20-80\% the number of nonzero
coefficients as the corresponding  lasso fit.  GL10 leads to
even more sparsity, often returning less than 10\% of the selected {\it df} from lasso. This suggest  {\tt gamlr}
with a small but nonzero $\gamma$ (e.g., as in GL1) as a reliable strategy for
compressing information without hurting predictive performance (our original motivating goal).  The supplement also
shows that $\gamma>0$ leads to a large drop in false discovery (with respect to oracle support) relative
the the lasso.   
In contrast,  marginal AL yields no more (and often less) sparsity than the standard
lasso and provides none of the desired information compression gain.  
Overall, MCP seems to have best variable selection properties (reducing FDR without dramatically lower sensitivity).

\vskip .4cm
\noindent \textit{Path properties}

\vskip .2cm \noindent $\bullet$~~  
 Figure \ref{fig:weights}
shows how the quantities in Theorem \ref{thm:sparseapprox} behave  for various coefficient
decay rates in a highly collinear ($\rho=.9$) and moderately noisy
($s2n=1$) setting.  Recall that prediction error
should get closer to that of an $\ell_0$ oracle if $\|\bs{\omega}_S\|$ -- the
norm on the weights in the support of the $\ell_0$ rule -- can be made small
without $\omega^{\mr{min}}_{S^c}$ -- the smallest weight on the complement of
this support -- shrinking too much.    In comparison to lasso, this is achieved for $\gamma=1$ under fast coefficient decay ({$\kappa=10$}, high effective sparsity) while 
$\omega^{\mr{min}}_{S^c}$  drops right after the path begins under slower decay and lower effective sparsity.   For the fast
diminishing bias of $\gamma=10$,  $\omega^{\mr{min}}_{S^c}$ drops more dramatically
and earlier; only in the ${\kappa}=10$ setting does there
appear to be any opportunity for improved prediction from $\gamma=10$ relative
to $\gamma=1$.  Marginal AL provides an interesting side-case; it maintains higher
$\omega^{\mr{min}}_{S^c}$ at the expense of a {\it much} higher
$\|\bs{\omega}_S\|$. 

\vskip .2cm \noindent $\bullet$~~ The far right panels in Figure
\ref{fig:weights} show the realized value of $L = 
\frac{\|\bs{\omega}_S\|}{\sqrt{s}}\left(\omega^{\mr{min}}_{S^c}-\sqrt{2}\sigma/\lambda\right)^{-1}$, the
constant governing our restricted eigenvalue $\phi^2(L,S)$ at the $C_p$-optimal 
model (see Definition \ref{redef}).  A small value of this constant is
important for keeping prediction error low (by keeping $\phi^2(L,S)$ big).
Moreover, in the supplement we show that small values of $L$ are essential in
controlling the false discovery rate.  In Figure \ref{fig:weights}, we see
that $L$ is decreasing steadily for each algorithm until the point where
$\omega^{\mr{min}}_{S^c}$ drops; after this point it explodes and, soon after
that, becomes undefined when $\omega^{\mr{min}}_{S^c} <
\sqrt{2}\sigma/\lambda$.

\section{Hockey example}
\label{sec:nhl}\vskip -.1cm

We close with an example analysis: measuring the
performance of hockey players.  It extends  analysis in
\cite{gramacy_estimating_2013,gramacy_hockey_2015}.  The  data include every goal in the National Hockey League (NHL) back to the
2002-2003 season: 69,449 goals and 2439 players.

For goal $i$ in
season $s$ with away team $a$ and home team $h$, say that $q_{i}$ is the probability that the home team scored this goal.  Our regression model is then
\begin{align}\label{hockeymod}
\mr{logit}\left[q_{i}\right] = \alpha_0 +
\alpha_{sh} - \alpha_{sa} + \bm{u}_i'\bs{\phi} + \bm{x}_i'\bs{\beta}_0 +
\bm{x}_i'\bs{\beta}_s, \end{align}  Vector $\bm{u}_i$ holds indicators for
various special-teams scenarios (e.g., a home team power play), and
$\bs{\alpha}$ provides matchup/season specific intercepts. Vector $\bm{x}_i$
contains player effects: $x_{ij}=1$ if player $j$ was on the home team and on
ice for goal $i$, $x_{ij}=-1$ for away player $j$ on ice for goal $i$, and
$x_{ij}=0$ for everyone not on the ice.   Coefficient $\beta_{0j} +
\beta_{sj}$ is the season-$s$ effect of player $j$ on the log odds that, given
a goal has been scored, the goal was scored by their team.  These effects are
`partial' in that they control for who else was on the ice, special teams
scenarios, and team-season fixed effects -- a player's $\beta_{0j}$ or $\beta_{sj}$
only need be nonzero if that player effects play above or below the team
average for a given season.

We estimate GL paths of $\bs{\hat\beta}$ from (\ref{hockeymod}) with
$\bs{\alpha}$ and $\bs{\phi}$ left {\it unpenalized}. Coefficient costs are
{\it not} scaled by covariate standard deviation, since this would have
favored players with little ice time.   Joint $[\gamma,\lambda]$ surfaces for AICc and BIC are in Figure \ref{nhlic}.
AICc favors denser models with low $\lambda$ but not-to-big $\gamma$,
while the BIC  prefers very sparse but relatively unbiased  models with large
$\lambda$ and small $\gamma$.  Both criteria are strongly adverse to any model
above $\gamma=100$, which is also where timings explode (supplement).  Ten-fold CV results are shown in Figure
\ref{nhlcv} for $\gamma$ of 0, 1, and 10.  The OOS error minima are around the
same in each case -- average deviance slightly above 1.16 -- but errors
increase much faster away from optimality with larger $\gamma$.  AICc selection is always between the CV error-minimizing selection and that of the common {\it 1SE} rule (largest $\lambda$ with mean OOS error no more than
1 standard error away from the minimum).

\begin{figure}[tb]
\begin{center}\vskip -.5cm
~~~~~~~~\includegraphics[width=5.5in]{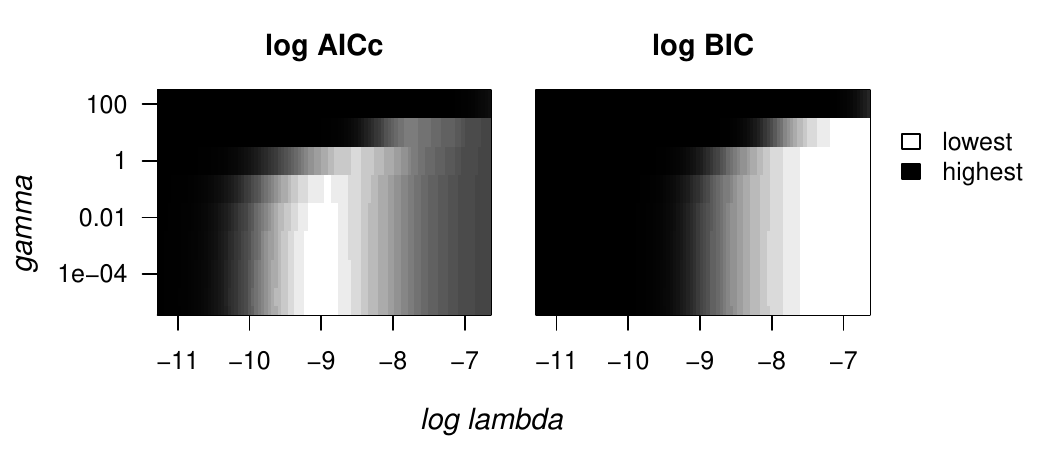}
\caption{ \label{nhlic} Hockey example AICc and BIC surfaces, rising from white to black on log scale.
}
\end{center}
\includegraphics[width=6.3in]{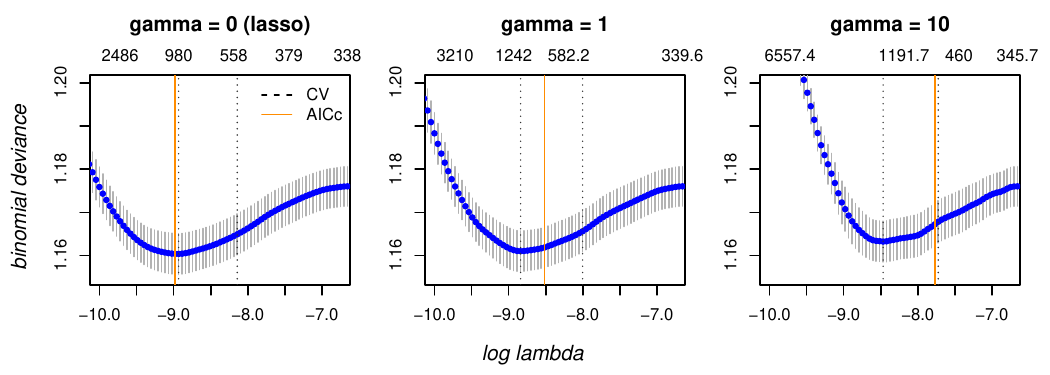}
\caption{\label{nhlcv} Hockey example 10-fold CV: mean OOS deviance $\pm 1$se, with minimum-error and 1SE selection rules marked with black dotted lines, and solid orange line showing AICc selection. }
\end{figure}

\begin{table}[tb]\footnotesize\setstretch{1}
\makebox[\linewidth]{
\begin{tabular}{l|lcc|lcc|lcc|}
\multicolumn{1}{c}{} & \multicolumn{3}{l}{\normalsize \it ~~~lasso} & \multicolumn{3}{l}{\normalsize \it ~~~$\gamma=1$} &  \multicolumn{3}{l}{\normalsize \it ~~~$\gamma=10$}\\
\multicolumn{1}{c}{}  &     &    PPM & PM &    &   PPM & PM        &   & PPM & PM \\
\cline{3-4}\cline{6-7}\cline{9-10}
1 & Ondrej Palat & 33.8 & 38 & Sidney Crosby & 29.2 & 52 & Sidney Crosby & 32.6 & 52 \\
2 & Sidney Crosby & 31.2 & 52 & Ondrej Palat & 29 & 38 & Jonathan Toews & 22.8 & 35 \\
3 & Henrik Lundqvist & 25.8 & 9 & Jonathan Toews & 21.4 & 35 & Joe Thornton & 22 & 34 \\
4 & Jonathan Toews & 24 & 35 & Joe Thornton & 21 & 34 & Anze Kopitar & 22 & 39 \\
5 & Andrei Markov & 23.1 & 34 & Andrei Markov & 20.9 & 34 & Andrei Markov & 20.7 & 34 \\
6 & Joe Thornton & 21.4 & 34 & Henrik Lundqvist & 19.8 & 9 & Alex Ovechkin & 18.1 & 16 \\
7 & Anze Kopitar & 20.6 & 39 & Anze Kopitar & 19.5 & 39 & Pavel Datsyuk & 16.6 & 13 \\
8 & Tyler Toffoli & 18.9 & 31 & Pavel Datsyuk & 16.1 & 13 & Ryan Getzlaf & 15.8 & 16 \\
9 & Pavel Datsyuk & 17.7 & 13 & Logan Couture & 15.9 & 29 & Henrik Sedin & 15.2 & 7 \\
10 & Ryan Nugent-hopkins & 17.4 & 18 & Alex Ovechkin & 15.8 & 16 & Marian Hossa & 14.9 & 21 \\
11 & Gabriel Landeskog & 16.6 & 36 & Marian Hossa & 14.4 & 21 & Alexander Semin & 14.7 & -1 \\
12 & Logan Couture & 16.5 & 29 & Alexander Semin & 14.2 & -1 & Jaromir Jagr & 14.5 & 28 \\
13 & Alex Ovechkin & 15.8 & 16 & Matt Moulson & 13.9 & 22 & Logan Couture & 14.2 & 29 \\
14 & Marian Hossa & 15.4 & 21 & Tyler Toffoli & 13.3 & 31 & Matt Moulson & 13.7 & 22 \\
15 & Alexander Semin & 14.8 & -1 & David Perron & 12.7 & 2 & Mikko Koivu & 13 & 12 \\
16 & Zach Parise & 14.7 & 21 & Mikko Koivu & 12.5 & 12 & Joe Pavelski & 12.6 & 33 \\
17 & Frans Nielsen & 13.5 & 8 & Frans Nielsen & 12.3 & 8 & Steven Stamkos & 12.6 & 24 \\
18 & Mikko Koivu & 13.4 & 12 & Ryan Getzlaf & 12.1 & 16 & Frans Nielsen & 12.5 & 8 \\
19 & Matt Moulson & 13.4 & 22 & Ryan Nugent-hopkins & 11.9 & 18 & Marian Gaborik & 12.3 & 29 \\
20 & David Perron & 13.1 & 2 & Jaromir Jagr & 11.8 & 28 & Zach Parise & 12.2 & 21 \\
\multicolumn{1}{c}{} & \multicolumn{3}{r|}{} & \multicolumn{3}{r|}{} &  \multicolumn{3}{r|}{}\\
 \multicolumn{1}{c}{} & \multicolumn{3}{r|}{\it 305 nonzero effects} & \multicolumn{3}{r|}{\it 204 nonzero effects} &  \multicolumn{3}{r|}{\it 64 nonzero effects}
\end{tabular}}
\caption{\label{nhleffects} Top 20 AICc selected player `partial plus-minus' (PPM) values for the 2013-2014 season, under $\gamma = 0,1,10$.  The number of nonzero player effects for each $\gamma$ are noted along the bottom.}
\end{table}

The original goal with this dataset was to build a better version of
hockey's `plus-minus' (PM) statistic: number of goals {\it for} minus {\it
against} each player's team while he is on the ice. To convert from player
effects $\beta_{0j} + \beta_{sj}$ to the scale of `plus/minus', set the probability that a goal was scored by his team given player
$j$ is on ice  (and no other information) as $p_j = e^{\beta_j}/(1+e^{\beta_j})$. The `partial
plus/minus' (PPM) is 
\begin{equation}
 \mr{ppm}_j = N_j(p_j - (1-p_j)) = N_j(2p_j-1)
 \end{equation}  where
$N_j$ is the  number of goals for which he was on-ice.  This measures 
quality and quantity of contribution and lives on the same scale as PM.  See \cite{gramacy_hockey_2015} for details.

Table \ref{nhleffects} contains the estimated PPM values for the 2013-2014
season under various $\gamma$ levels, using AICc selection.  We see that, even
if changing concavity ($\gamma$) has little effect on minimum CV errors (Figure \ref{nhlcv}),
larger $\gamma$  yield more sparse models and different conclusions about
player contribution. At the $\gamma=0$ lasso, there are 305 nonzero player
effects (individuals measurably different from their team's average ability)
and the list includes young players who have had very strong starts to their
careers.  For example, Ondrej Palat and Tyler Toffoli both played their first
full seasons in the NHL in 2013-2014.  As $\gamma$ increases to 1, these young
guys  drop in rank while more proven stars (e.g., Sidney Crosby and Jonathan
Toews) move up the list.  Finally, at $\gamma=10$ only big-name stars remain
amongst the 64 nonzero player effects.

\section{Discussion}
\label{discussion}

Whenever exact solvers are too
 expensive, concave penalized estimation reduces largely to weighted-$\ell_1$ penalization.
Path adaptation is an intuitively reasonable source of weights, and we are able to show that POSE -- particularly {\tt gamlr} with AICc selection -- provides high quality diminishing-bias sparse regression at
{\it no more cost} than a  single lasso path.  We know of no other software that meets this standard.

\setstretch{1}
\bibliographystyle{chicago}
\bibliography{taddy}

\end{document}